\documentclass{article}

\usepackage{booktabs} 
\usepackage[ruled]{algorithm2e} 

\SetAlFnt{\small}
\SetAlCapFnt{\small}
\SetAlCapNameFnt{\small}
\SetAlCapHSkip{0pt}
\IncMargin{-\parindent}

\linespread{1.06}
\usepackage{tabularx}

\usepackage[margin = 1in]{geometry}
\usepackage{bbding}
\usepackage[nointegrals]{wasysym}
\usepackage[utf8]{inputenc}
\usepackage{xspace}
\usepackage{mathtools}
\usepackage{url}
\usepackage{subcaption}
\usepackage{tikz}
\usetikzlibrary{arrows, fit}
\usetikzlibrary{backgrounds,automata,calc}
\usetikzlibrary{decorations.pathreplacing}
\usepackage{csquotes}
\usepackage{amsmath,amsfonts,amsthm}
\usepackage{pgfplots}

\DeclareMathOperator*{\argmax}{arg\,max}

\usepackage{enumerate}
\usepackage{paralist}
\usepackage[shortlabels]{enumitem}
\usepackage{appendix}
\usepackage[capitalize]{cleveref}
\usepackage{thmtools}
\usepackage{mathtools}
\usepackage{natbib}
\usepackage{changepage}

\usepackage{commath}

\newcommand{\CMCA}{\textsc{ComputeMinCostAllocation}\xspace}

\newcommand{\MAXUSW}{\texttt{MAX-USW}\xspace}

\newcommand{\R}{\mathbb{R}}
\newcommand{\Z}{\mathbb{Z}}

\newcommand{\lex}{\texttt{lex}}

\newtheorem{theorem}{Theorem}[section]
\newtheorem{lemma}[theorem]{Lemma}

\newtheorem{prop}[theorem]{Proposition}

\newtheorem*{theorem*}{Theorem}

\theoremstyle{definition}

\theoremstyle{definition}

\title{Weighted Notions of Fairness with Binary Supermodular Chores}
\author{Vignesh Viswanathan and Yair Zick \\
\texttt{\{vviswanathan, yzick\} @ umass.edu}}
\date{}

\begin{document}

\maketitle

\begin{abstract}
We study the problem of allocating indivisible chores among agents with binary supermodular cost functions. In other words, each chore has a marginal cost of $0$ or $1$ and chores exhibit increasing marginal costs (or decreasing marginal utilities). In this note, we combine the techniques of \citet{viswanathan2022generalyankee} and \citet{barman2023chores} to present a general framework for fair allocation with this class of valuation functions. Our framework allows us to generalize the results of \citet{barman2023chores} and efficiently compute allocations which satisfy weighted notions of fairness like weighted leximin or min weighted $p$-mean malfare for any $p \ge 1$. 
 \end{abstract}

\section{Introduction}\label{sec:intro}
Consider the problem of dividing a set of tasks among the employees of a firm. Employees can specify which tasks they would be happy to complete; they may also have upper limits for the number of tasks they would be happy to complete. However, all tasks must be assigned to some employee and completed by them. In such cases, it is best both for the employees and the firm that the tasks get assigned in a way that is {\em efficient} --- we maximize the number of tasks employees will be happy completing and {\em fair} --- we distribute tasks evenly among the employees. In addition, firm hierarchies must be taken into consideration as well; different employees may have different weights (or entitlements).

This problem can be naturally modelled as a fair allocation of indivisible chores problem. If we assume agents are only allowed to submit `yes' or `no' preferences for each task and tasks have increasing costs to scale, they are said to have binary supermodular cost functions. This specific problem has been extensively studied in the unweighted setting by \citet{barman2023chores}. In particular, \citet{barman2023chores} show that leximin allocations, envy free up to one chore (EF1) allocations and maxmin fair allocations can be computed efficiently. 

In this work, we generalize these results to the weighted setting where each agent has a potentially different weight (or entitlement). We show that weighted leximin allocations as well as min weighted $p$-mean malfare allocations can be computed efficiently. More generally, we present sufficient conditions for the efficient computation of any {\em justice criterion} (similar to the results in \citet{viswanathan2022generalyankee}).
\section{Preliminaries}\label{sec:prelims}
We use $[k]$ to denote the set $\{1, 2, \dots, k\}$. Given a set $S$ and an element $o$, we use $S + o$ and $S - o$ to denote the sets $S \cup \{o\}$ and $S \setminus \{o\}$ respectively. 

We have a set of $n$ {\em agents} $N = [n]$ and set of $m$ {\em indivisible chores} $O = \{o_1, o_2, \dots, o_m\}$. Each agent $i \in N$ has a {\em cost function} $c_i: 2^O \rightarrow \R_{\ge 0}$; $c_i(S)$ denotes the cost of the set of items $S$ according to agent $i$. 
For each agent $i\in N$, we let $\Delta_i(S,o) = c_i(S+o) - c_i(S)$ denote the marginal cost of the chore $o$ to the bundle $S$ for the agent $i$.
Each agent $i \in N$ has an associated weight $w_i$ which corresponds to their entitlement.

We assume agents have binary supermodular cost functions. That is, for every agent $i \in N$, $c_i$ satisfies the following conditions:
\begin{inparaenum}[(a)]
    \item $c_i(\emptyset) = 0$, 
    \item for any $S \subseteq O$, and $o \in O \setminus S$, $\Delta_i(S, o) \in \{0, 1\}$, and 
    \item for any $S \subseteq T \subseteq O$ and $o \in O \setminus T$, $\Delta_i(S, o) \le \Delta_i(T, o)$.
\end{inparaenum}

An {\em allocation} $X = (X_0, X_1, \dots, X_n)$ is an $(n+1)$-partition of the set of chores $O$. $X_i$ denotes the set of chores allocated to agent $i$ and $X_0$ denotes the set of unallocated chores. Our goal is to compute {\em complete} fair allocations --- allocations where $X_0 = \emptyset$. Given an allocation $X$, we refer to $-c_i(X_i)$ as the utility of agent $i$ under the allocation $X$. We also define the utility vector of an allocation $X$ as the vector $\vec u^X = (-c_1(X_1), -c_2(X_2), \dots, -c_n(X_n))$. 

These definitions can be extended to the weighted context as well. We define the weighted utility of an agent $i \in N$ under the allocation $X$ as $\frac{-c_i(X_i)}{w_i}$. Similarly, we define the weighted utility vector of an allocation $X$ as the vector $\vec e^X = \big (\frac{-c_1(X_1)}{w_1}, \frac{-c_2(X_2)}{w_2}), \dots, \frac{-c_n(X_n)}{w_n} \big )$.

A vector $\vec y \in \R^n$ {\em lexicographically dominates} a vector $\vec z \in \R^n$ (denoted $\vec y \succ_{\lex} \vec z$) if there exists a $k \in [n]$ such that for all $j \in [k-1]$, $y_j = z_j$ and $y_k > z_k$. For readability, we use $X \succ_{\lex} Y$ to denote $\vec u^X \succ_{\lex} \vec u^Y$. If $X \succ_{\lex} Y$, we sometimes say $X$ lexicographically dominates $Y$.

\subsection{Justice Criteria}\label{sec:justice-criteria}
There are several {\em justice criteria} (or {\em fairness metrics}) in the literature. We focus on a few specific ones in this paper.

\noindent\textbf{Utilitarian Social Welfare: } The utilitarian social welfare (USW) of an allocation $X$ is given by $\sum_{i \in N} - c_i(X_i)$ --- the sum of the utilities of all the agents. 
An allocation is referred to as \MAXUSW if it maximizes the utilitarian social welfare.

\noindent\textbf{Weighted Leximin:} An allocation is said to be weighted leximin if it maximizes the {\em weighted} utility of the agent with the least weighted utility and subject to that, maximizes the {\em weighted} utility of the agent with the second least weighted utility, and so on. This can be formalized using the sorted weighted utility vector. The sorted weighted utility vector of an allocation $X$ (denoted by $\vec {se}^X$) corresponds to its weighted utility vector sorted in ascending order. An allocation $X$ is weighted leximin if for any allocation $Y$, $\vec {se}^X \succeq_{\lex} \vec {se}^Y$.

\noindent\textbf{Min Weighted $p$-mean malfare:} For any $p\ge 1$, an allocation $X$ is min weighted $p$-mean malfare if it minimizes $\sum_{i \in N}w_i c_i(X_i)^p$. $p$-mean malfare functions have been used extensively in fair machine learning \citep{cousins2021axiomatic} and are natural justice criteria for fair chore allocation.

More generally, a justice criterion $\Psi$ is a way to compare utility vectors of allocations. We use $\vec u^{X} \succ_{\Psi} \vec u^Y$ to say $X$ is better than $Y$ according to justice criterion $\Psi$. For example, if $\Psi$ corresponds to the utilitarian social welfare metric, we say $\vec u^X \succeq_{\Psi} \vec u^Y$ if $X$ has a utilitarian social welfare greater than or equal to $Y$, i.e. $\sum_{i \in N}c_i(X_i) \le \sum_{i \in N}c_i(Y_i)$.  We usually replace $\vec u^X \succeq_{\Psi} \vec u^Y$ with $X \succeq_{\Psi} Y$ for readability.
We assume that $\Psi$ induces a total ordering over the set of all allocations. 
Given a justice criterion $\Psi$, our goal is to a compute a $\Psi$-maximal complete allocation $X$ --- more formally, we would like to compute a complete allocation $X$ such that for all complete allocations $Y$, we have $X \succeq_{\Psi} Y$.

\subsection{Understanding Binary Supermodular Costs}
We establish a few important properties of the cost function $c_i$. 
The first result (from \citet{barman2023chores}) shows that we can efficiently compute a maximum size partial allocation of chores with a utilitarian social welfare of $0$. More formally,

\begin{theorem}[\citet{barman2023chores}]\label{thm:min-cost-partial-allocation}
When agents have binary supermodular cost functions, there exists an efficient algorithm (we call \CMCA) which can compute an allocation $X$ with USW $0$ such that for no other allocation $Y$ with a USW of $0$, we have $|\bigcup_{i \in N} X_i| < |\bigcup_{i \in N} Y_i|$.
\end{theorem}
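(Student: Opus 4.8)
The plan is to recognize the ``zero-cost'' structure of each agent as a matroid and reduce the problem to \emph{matroid union}. Fix an agent $i \in N$ and define $h_i : 2^O \to \Z_{\ge 0}$ by $h_i(S) = |S| - c_i(S)$. I claim $h_i$ is the rank function of a matroid $M_i$ on ground set $O$. Indeed, $h_i(\emptyset) = 0$ since $c_i(\emptyset) = 0$; for any $S \subseteq O$ and $o \in O \setminus S$ we have $h_i(S + o) - h_i(S) = 1 - \Delta_i(S,o) \in \{0,1\}$ by condition (b), which gives both monotonicity and (together with $h_i(\emptyset)=0$) the bounds $0 \le h_i(S) \le |S|$; and $h_i$ is submodular because $c_i$ is supermodular (condition (c)) and adding the modular function $S \mapsto |S|$ preserves submodularity. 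These are exactly the axioms characterizing a matroid rank function, so $M_i$ is well defined, and a set $S \subseteq O$ is independent in $M_i$ if and only if $h_i(S) = |S|$, i.e. if and only if $c_i(S) = 0$.

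Next I would translate the optimization target. Since costs are nonnegative, an allocation $X$ has $\USW$ equal to $0$ exactly when $c_i(X_i) = 0$ for every $i$, i.e. exactly when $X_i$ is independent in $M_i$ for every $i$. Hence the maximum of $|\bigcup_{i \in N} X_i|$ over allocations $X$ with $\USW = 0$ equals the largest size of a set expressible as a disjoint union $\bigcup_{i \in N} I_i$ with $I_i$ independent in $M_i$; dropping disjointness does not change this value, since duplicates can be deleted without leaving the matroids. This is precisely the rank of the matroid union $M_1 \vee \cdots \vee M_n$.

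Finally I would invoke the matroid union / matroid partition theorem of Nash-Williams and Edmonds: given independence oracles for $M_1, \dots, M_n$, there is a polynomial-time algorithm that computes a maximum-size member $S^\star$ of $M_1 \vee \cdots \vee M_n$ together with a witnessing partition $S^\star = \bigsqcup_{i \in N} I_i$ with each $I_i$ independent in $M_i$. An independence oracle for $M_i$ is obtained from a single evaluation of $c_i$ (test whether $c_i(S) = 0$), so the algorithm is efficient in our model. Setting $X_i = I_i$ and $X_0 = O \setminus S^\star$ yields the desired allocation, and we take this procedure to be \CMCA.

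The main obstacle is the verification that $h_i$ is a genuine matroid rank function; everything downstream is a direct application of classical matroid algorithmics. The delicate point there is submodularity, which is exactly where condition (c) (increasing marginal costs / supermodularity) is used --- without it the zero-cost sets of an agent need not form a matroid and the matroid-union reduction collapses. A secondary, more cosmetic point is the oracle model: one should remark that each rank/independence query reduces to one cost-function evaluation, so ``efficient'' for matroid union coincides with ``efficient'' in the sense of the theorem statement.
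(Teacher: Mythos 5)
Your argument is correct: $h_i(S)=|S|-c_i(S)$ does satisfy the rank axioms (normalization, unit increments, submodularity from the supermodularity of $c_i$), zero-cost bundles are exactly the independent sets, and maximizing the number of allocated chores over USW-$0$ allocations is exactly computing the rank of the matroid union, which Edmonds' algorithm does in polynomial time with independence oracles given by single evaluations of the $c_i$. Note, however, that the paper itself gives no proof of this statement --- it is imported verbatim from \citet{barman2023chores} --- so your write-up is compared against that source rather than anything in this note; the matroid-union reduction you give is essentially the standard (and, to the best of my knowledge, the original) route to this result, so there is no genuinely different approach to contrast, only the cosmetic remark you already make that each independence query is one cost evaluation, plus the equally cosmetic observation that binary marginals force $c_i$ to be integer-valued so $h_i$ is a well-defined integer rank function.
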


Our next result shows that any allocation $X$ can be decomposed into two allocations $X^1$ and $X^0$ such that for every agent $i \in N$, $c_i(X_i) = |X^1_i|$ and $c_i(X^0_i) = 0$. This result is similar to the decomposition result in \citet{cousins2023bivalued}. 
Consistent with the terminology from \citet{cousins2023bivalued}, we refer to $X^0$ as the clean allocation and $X^1$ as the supplementary allocation.

\begin{lemma}\label{lem:decomposition}
When agents have binary supermodular cost functions, any allocation $X$ can be decomposed into two allocations $X^0$ and $X^1$ such that for any agent $i \in N$, 
\begin{inparaenum}[(a)]
    \item $X^1_i \cup X^0_i = X_i$, 
    \item $X^1_i \cap X^0_i = \emptyset$,
    \item $c_i(X^0_i) = 0$, and
    \item $c_i(X_i) = |X^1_i|$.
\end{inparaenum}
\end{lemma}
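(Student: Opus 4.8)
The plan is to construct the decomposition greedily by processing the chores in each bundle $X_i$ one at a time, using supermodularity to control where the marginal costs of $1$ appear. Fix an agent $i$ and fix an arbitrary ordering $o_{j_1}, o_{j_2}, \dots, o_{j_k}$ of the chores in $X_i$. I would walk through this list and maintain a running "clean" set $C$ (initially $\emptyset$); when I reach chore $o_{j_t}$, I add it to $C$ if $\Delta_i(C, o_{j_t}) = 0$, and otherwise I place it in the supplementary set. Define $X^0_i := C$ at the end of this process and $X^1_i := X_i \setminus C$. Properties (a) and (b) are immediate from the construction.

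For property (c), I would argue that $c_i(X^0_i) = 0$ by a telescoping sum: if $X^0_i = \{o_{a_1}, \dots, o_{a_r}\}$ listed in the order they were added, then $c_i(X^0_i) = \sum_{s=1}^{r} \Delta_i(\{o_{a_1}, \dots, o_{a_{s-1}}\}, o_{a_s})$, and each term is $0$ because at the moment $o_{a_s}$ was added, the current clean set was exactly $\{o_{a_1}, \dots, o_{a_{s-1}}\}$ and we only add a chore when its marginal cost relative to the current clean set is $0$. Here I use that the clean set only grows, so the set at the time of adding $o_{a_s}$ is precisely the prefix. Combined with $c_i(\emptyset) = 0$, this gives $c_i(X^0_i) = 0$.

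The main obstacle is property (d): showing $c_i(X_i) = |X^1_i|$. The inequality $c_i(X_i) \le |X^1_i|$ follows from subadditivity-type reasoning or another telescoping argument over the full order $o_{j_1}, \dots, o_{j_k}$, since every chore contributes a marginal of $0$ or $1$ and the chores contributing $1$ along this particular order are a subset of $X^1_i$. The harder direction is $c_i(X_i) \ge |X^1_i|$: I need that every chore $o$ that got placed in the supplementary set actually "costs" something in $c_i(X_i)$. The key point is that when $o_{j_t}$ was rejected from the clean set, we had $\Delta_i(C_t, o_{j_t}) = 1$ where $C_t$ was the clean set at that time; since $C_t \subseteq X^0_i \subseteq X_i \setminus \{o_{j_t}\}$, supermodularity (condition (c) of binary supermodular costs) gives $\Delta_i(X_i \setminus \{o_{j_t}\}, o_{j_t}) \ge \Delta_i(C_t, o_{j_t}) = 1$, hence $= 1$. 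So every supplementary chore has marginal cost exactly $1$ relative to the rest of $X_i$. Now I would finish by ordering $X_i$ so that all of $X^0_i$ comes first, followed by the chores of $X^1_i$ in the order they were rejected; telescoping $c_i(X_i)$ along this order, the $X^0_i$-prefix contributes $0$ by (c), and I need each $X^1_i$-chore to still contribute $1$ when added on top of $X^0_i$ plus earlier $X^1_i$-chores — which again follows from supermodularity, since that intermediate set is sandwiched between $C_t$ and $X_i \setminus \{o_{j_t}\}$, both of which force the marginal to be $1$. Summing gives $c_i(X_i) = |X^1_i|$, completing the proof.
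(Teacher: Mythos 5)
Your proof is correct, but it takes a genuinely different route from the paper's. The paper defines $X^0_i$ non-constructively as a largest zero-cost subset of $X_i$, telescopes over $X^1_i$ to get $c_i(X_i)\le |X^1_i|$, and then rules out strict inequality by contradiction: a zero marginal term would, via supermodularity, allow a chore of $X^1_i$ to be absorbed into $X^0_i$, contradicting maximality. You instead build $X^0_i$ by a single greedy pass and prove equality directly: every rejected chore has marginal $1$ with respect to the clean set $C_t$ at the moment of rejection, supermodularity propagates this lower bound to any superset of $C_t$ not containing that chore, and binary marginals cap it at $1$; telescoping along ``$X^0_i$ first, then $X^1_i$ in rejection order'' then gives $c_i(X_i)=0+|X^1_i|$ exactly. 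Your version buys an explicit polynomial-time construction of the decomposition (the paper's largest zero-cost subset is merely asserted to exist, which suffices for an existence lemma but is less algorithmic), at the cost of a somewhat longer argument. Two small blemishes, neither fatal: the preliminary claim that $c_i(X_i)\le|X^1_i|$ follows by telescoping along the original processing order is not justified as stated (an accepted chore can have marginal $1$ relative to a prefix containing previously rejected chores), but this step is redundant since your final telescoping yields equality outright; and the remark that the superset $X_i\setminus\{o_{j_t}\}$ also ``forces'' the marginal to be $1$ has supermodularity pointing the wrong way --- only the subset $C_t$ supplies the needed lower bound, with binary marginals giving the matching upper bound, which is what your argument actually uses.
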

\begin{proof}
For each agent $i \in N$, let $X^0_i$ be the largest subset of $X_i$ such that $c_i(X^0_i) = 0$; let $X^1_i$ contain all the chores in $X_i \setminus X^0_i$. This construction trivially satisfies conditions (a), (b) and (c). We show (d) next.

Fix an agent $i \in N$. Let $X^1_i = \{o_1, \dots, o_k\}$ and let $Z^j = \{o_1, \dots, o_j\}$ (with $Z^{0} = \emptyset$). Consider the telescoping sum
\begin{align*}
    c_i(X_i) = c_i(X^0_i) + \sum_{j \in [k]} \Delta_i(X^0_i \cup Z^{j-1}, o_j)
\end{align*}

Since $c_i(X^0_i) = 0$ and $c_i$ has binary marginal gains, the above inequality gives us, $c_i(X_i) \le |X^1_i|$. If the inequality is strict, for some chore $o_j \in X^1_i$, we must have $\Delta_i(X^0_i \cup Z^{j-1}, o_j) = 0$. From the supermodularity of $c_i$, it must be the case that $\Delta_i(X^0_i, o_j) = 0$ as well. This means that $c_i(X^0_i + o_j) = 0$ which contradicts our construction of $X^0_i$. Therefore, equality must hold and we must have $c_i(X_i) = |X^1_i|$. This means (d) holds as well for each agent $i \in N$ and our proof is complete.
\end{proof}

Decompositions need not be unique; that is, there may be several clean and supplementary allocations which satisfy (a)--(d) above. We use the notation $X = X^0 \cup X^1$ to denote a decomposition of $X$ into (an arbitrary) $X^0$ and $X^1$ that satisfies the properties (a)--(d) above. More generally, for any two allocations $X$ and $Y$, we use $X \cup Y$ to denote the allocation where every agent $i \in N$ receives the bundle $X_i \cup Y_i$.

\section{General Yankee Swap for Chores}
We are now ready to present our general framework for chore allocation. Our framework is simple, we first compute a partial allocation with a USW of $0$ according to Theorem \ref{thm:min-cost-partial-allocation}. We then allocate the remaining chores one by one greedily according to $\Psi$. 

Similar to \citet{viswanathan2022generalyankee}, this greedy allocation is done via a gain function $\phi$. $\phi$ takes as input the utility vector of an allocation $X$ and an agent $i$ and outputs a real-valued $b$-dimensional vector corresponding to the `value' of allocating a chore with marginal cost $1$ to agent $i$ according to the justice criterion $\Psi$. 
The higher the value of the gain function $\phi(\vec u^X, i)$, the better it is to allocate a chore (with marginal cost $1$) to agent $i$. When $b > 1$, $\phi(\vec u^X, i) > \phi(\vec u^X, j)$ if $\phi(\vec u^X, i) \succ_{\lex} \phi(\vec u^X, j)$. For readability, we sometimes use $\phi(X, i)$ to denote $\phi(\vec u^X, i)$.

We present pseudocode in Algorithm \ref{algo:general-yankee-chores}. We first compute a partial allocation using \CMCA (Theorem \ref{thm:min-cost-partial-allocation}) and iteratively update this allocation using the gain function. 
More specifically, we compute gain function values for each agent with respect to the current partial allocation and then choose the agent with the highest gain function value; we break ties by choosing the agent with the higher index. 
We then give this agent an arbitrary unallocated chore. 
This proceeds until there are no unallocated chores left. 
To clearly differentiate the chores with marginal cost $0$ computed initially and the chores allocated greedily with marginal cost $1$, we store them separately using $X^0$ and $X^1$. 
$X^0$ consists of the initially constructed partial allocation and $X^1$ consists of the chores allocated greedily in the second phase. 
This does not come with any abuse of notation; as we shall see, $X^0$ and $X^1$ correspond to a valid decomposition of the complete allocation given by their union $X^0 \cup X^1$.

\subsection{Sufficient Conditions for the Correctness of \Cref{algo:general-yankee-chores}}
The conditions for General Yankee Swap for Chores are similar to those for General Yankee Swap \citep{viswanathan2022generalyankee} and are given below. 
These conditions are defined for arbitrary vectors, but it is useful to think of $\vec x$, $\vec y$ and $\vec z$ as utility vectors.

\begin{description}
    \item[(C1) --- Pareto Dominance:] For any two vectors $\vec x, \vec y \in \Z^n_{\le 0}$, if $x_h \ge y_h$ for all $h \in N$, then $\vec x \succeq_{\Psi} \vec y$. Equality holds iff $\vec x = \vec y$.
    \item[(C2) --- Gain Function:] $\Psi$ admits a gain function $\phi$ that maps each possible utility vector and agent to a $b$-dimensional real-valued vector satisfying:
    \begin{description}
        \item[(G1)] For any vector $\vec x \in \Z^n_{\le 0}$ and any $i, j \in [n]$, Let $\vec y \in \Z^n_{\le 0}$ be the vector that results from starting at $\vec x$ and reducing $x_i$ by $1$. 
        Similarly, let $\vec z \in \Z^n_{\le 0}$ be the vector resulting reducing $x_j$ by $1$. 
        If $\phi(\vec x, i) \ge \phi(\vec x, j)$, $\vec y \succeq_{\Psi} \vec z$. 
        Equality holds iff $\phi(\vec x, i) = \phi(\vec x, j)$.
        \item[(G2)] For any two vectors $\vec x, \vec y \in \Z^n_{\le 0}$ and $i \in [n]$, if $x_i \ge y_i$, then $\phi(\vec x, i) \ge \phi(\vec y, i)$; equality holds if $x_i = y_i$.
    \end{description}
\end{description}

\begin{algorithm}[t]
    \caption{General Yankee Swap for Chores}
    \label{algo:general-yankee-chores}
    \DontPrintSemicolon
    $X^0 = (X^0_0, X^0_1, \dots, X^0_n) \gets$ \CMCA (Theorem \ref{thm:min-cost-partial-allocation})\;
    $X^1 = (X^1_0, X^1_1, \dots, X^1_n) \gets (G, \emptyset, \emptyset, \dots, \emptyset)$\;
    \While{$|X^0_0| > \sum_{j \in N} |X^1_i|$}{
        $S \gets \argmax_{j \in N} \phi(X^0 \cup X^1, j)$\;
        $i \gets \max_{j \in S} j$\;
        $X^1_i \gets X^1_i + o$ for some chore $o \in X^0_0 \setminus (\bigcup_{j \in N} X^1_j)$\;
        $X^1_0 \gets X^1_0 - o$\;
    }
    \Return $X^0 \cup X^1$\;
\end{algorithm}

\subsection{Analysis}
We first show that $X^0$ and $X^1$ are a valid decomposition of $X^0 \cup X^1$.
\begin{lemma}
At any iteration of Algorithm \ref{algo:general-yankee-chores}, $X^0$ and $X^1$ is a valid decomposition of $X^0 \cup X^1$. In other words, for any agent $i \in N$, the following holds:
\begin{inparaenum}[(a)] 
    \item $X^1_i \cap X^0_i = \emptyset$,
    \item $c_i(X^0_i) = 0$, and
    \item $c_i(X^1_i \cup X^0_i) = |X^1_i|$.
\end{inparaenum}
\end{lemma}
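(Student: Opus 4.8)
The plan is to prove this by induction on the iterations of Algorithm \ref{algo:general-yankee-chores}, maintaining (a)--(c) as a loop invariant. The key structural observation is that $X^0$ is never modified after initialization — every update in the while loop touches only $X^1$ (and $X^1_0$) — so property (b) will carry over automatically from the guarantees of \CMCA, and the only real work is tracking how (a) and (c) behave when a chore is moved into some $X^1_i$.

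For the base case, \CMCA returns $X^0$ with $\USW$ equal to $0$; since every $c_i$ is nonnegative and $c_i(\emptyset)=0$, a zero $\USW$ forces $c_i(X^0_i)=0$ for every $i \in N$, which is (b). At initialization $X^1_i=\emptyset$ for all $i\in N$, so (a) is immediate and (c) reduces to $c_i(X^0_i) = 0 = |\emptyset|$, i.e.\ to (b) again. For the inductive step, suppose (a)--(c) hold at the start of an iteration in which agent $i$ is selected and a chore $o \in X^0_0 \setminus \bigcup_{j\in N} X^1_j$ is appended to $X^1_i$. Properties (a) and (b) are routine: since $X^0$ is a partition and $o \in X^0_0$, we have $o \notin X^0_i$, and by the choice of $o$ we have $o \notin X^1_i$; no other bundle changes, and $c_i(X^0_i)$ is untouched.

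The heart of the argument is re-establishing (c) for agent $i$, namely $c_i\big((X^1_i + o) \cup X^0_i\big) = |X^1_i| + 1$ (using $o \notin X^1_i$, so $|X^1_i + o| = |X^1_i|+1$). Expanding via the marginal, $c_i\big((X^1_i\cup X^0_i) + o\big) = c_i(X^1_i\cup X^0_i) + \Delta_i(X^1_i\cup X^0_i, o)$, and the inductive hypothesis gives $c_i(X^1_i\cup X^0_i) = |X^1_i|$, so it suffices to show $\Delta_i(X^1_i \cup X^0_i, o) = 1$. Since marginals are binary, the only alternative is $\Delta_i(X^1_i\cup X^0_i, o) = 0$, and I will rule this out: by supermodularity applied to $X^0_i \subseteq X^1_i \cup X^0_i$ we would get $\Delta_i(X^0_i, o) = 0$, hence $c_i(X^0_i + o) = 0$; but then moving $o$ from $X^0_0$ into $X^0_i$ yields an allocation with $\USW$ still $0$ and with strictly more chores allocated than $X^0$, contradicting the maximality of \CMCA from Theorem \ref{thm:min-cost-partial-allocation}.

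The main (indeed essentially the only nontrivial) obstacle is exactly this last contradiction: it is where the binary-and-supermodular structure of the cost functions interacts with the fact that \CMCA returns a \emph{maximum-size} zero-cost partial allocation, which is precisely what guarantees that every chore left in $X^0_0$ has marginal cost $1$ on top of any clean-plus-supplementary bundle the algorithm has built so far. Once $\Delta_i(X^1_i\cup X^0_i, o) = 1$ is in hand, (c) holds for agent $i$, all other agents' bundles are unchanged so their invariants persist, and the induction closes.
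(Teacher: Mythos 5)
Your proof is correct and is essentially the paper's argument: the crucial step---a zero marginal for the newly added chore would, by supermodularity, give $\Delta_i(X^0_i, o) = 0$ and hence a larger USW-$0$ partial allocation contradicting Theorem \ref{thm:min-cost-partial-allocation}---is exactly the paper's contradiction. The only difference is presentational: you unroll it as an induction over iterations, while the paper applies a telescoping-sum argument to the bundle at an arbitrary iteration.
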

\begin{proof}
(a) and (b) hold trivially; we show (c) via an argument similar to Lemma \ref{lem:decomposition}. 

Fix an agent $i \in N$. 
Let $X^1_i = \{o_1, \dots, o_k\}$ and let $Z^j = \{o_1, \dots, o_j\})$ (with $Z^{0} = \emptyset$). 
Consider the telescoping sum
\begin{align*}
    c_i(X_i) = c_i(X^0_i) + \sum_{j \in [k]} \Delta_i(X^0_i \cup Z^{j-1}, o_j)
\end{align*}
Since $c_i(X^0_i) = 0$ and $c_i$ has binary marginal gains, we have that $c_i(X_i) \le |X^1_i|$. 
If the inequality is strict, for some chore $g_j \in X^1_i$, we have $\Delta_i(X^0_i \cup Z^{j-1}, o_j) = 0$. 
Since $c_i$ is supermodular, $\Delta_i(X^0_i, o_j) = 0$ as well, and therefore $c_i(X^0_i + o_j) = 0$.
Since $o_j \in X^0_0$, we can construct a larger USW $0$ partial allocation by moving $o_j$ to $X^0_i$ --- contradicting Theorem \ref{thm:min-cost-partial-allocation}. 
\end{proof}
We use this result to show the correctness of the algorithm.

\begin{theorem}
For any justice criterion $\Psi$ that satisfies (C1) and (C2) with gain function $\phi$, Algorithm \ref{algo:general-yankee-chores} run with input gain function $\phi$ efficiently computes a $\Psi$ maximizing allocation.
\end{theorem}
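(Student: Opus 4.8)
The plan is to run an induction over the iterations of the while loop, maintaining the invariant that the partial allocation built so far can still be ``completed'' into a $\Psi$-maximal complete allocation; once the loop terminates this pins down the utility vector of the output. First I would collect four preliminary facts. (i) The loop runs for exactly $k^\star := |X^0_0|$ iterations --- each iteration allocates one chore of $X^0_0$ --- so the algorithm returns a \emph{complete} allocation, and by the decomposition lemma for Algorithm~\ref{algo:general-yankee-chores} its utility vector has $i$-th coordinate $-|X^1_i|$. (ii) By the same telescoping/supermodularity argument as in Lemma~\ref{lem:decomposition}: if $Y$ is any allocation, $Y^0_j$ is a largest cost-$0$ subset of $Y_j$, and $o\in Y_j\setminus Y^0_j$, then $\Delta_j(Y_j-o,o)=1$, so $c_j(Y_j-o)=c_j(Y_j)-1$ --- deleting a ``costly'' chore removes exactly one unit of cost. (iii) From Theorem~\ref{thm:min-cost-partial-allocation}, every complete allocation $Y$ has $\sum_i c_i(Y_i)\ge k^\star$ (a complete allocation of smaller total cost would, via Lemma~\ref{lem:decomposition}, give a larger USW-$0$ partial allocation than \CMCA produces). (iv) --- the key structural fact --- in any $\Psi$-maximal complete allocation $Y^\star$, for every agent $j$, every costly chore $o$ of $Y^\star_j$, and every agent $i$, we have $\Delta_i(Y^\star_i,o)=1$; otherwise moving $o$ from $j$ to $i$ would, by (ii), decrease $c_j$ by one and change no one else's cost, yielding a complete allocation that Pareto-dominates $Y^\star$ and contradicting (C1).

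The induction maintains: after $t$ iterations there is a $\Psi$-maximal complete allocation $Y^{(t)}$, with cost vector $\vec d^{(t)}$, such that $|X^1_i|\le d^{(t)}_i$ for every agent $i$. The base case $t=0$ is trivial. For the step, suppose the loop next allocates a chore to $i^\star$, the highest-indexed maximizer of $\phi(X^0\cup X^1,\cdot)$. If $|X^1_{i^\star}|<d^{(t)}_{i^\star}$, then keeping $Y^{(t+1)}:=Y^{(t)}$ preserves the invariant. The interesting case is $|X^1_{i^\star}|=d^{(t)}_{i^\star}$: since $\sum_i|X^1_i|=t<k^\star\le\sum_i d^{(t)}_i$ by (iii), some agent $j$ has $|X^1_j|<d^{(t)}_j$, so $Y^{(t)}_j$ contains a costly chore $o$; by (iv), $\Delta_{i^\star}(Y^{(t)}_{i^\star},o)=1$, so moving $o$ from $j$ to $i^\star$ in $Y^{(t)}$ produces a complete allocation $Y^{(t+1)}$ with cost vector $\vec d^{(t)}-\vec e_j+\vec e_{i^\star}$, which still coordinatewise dominates the cost vector the algorithm holds after this iteration. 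The crux is to show $Y^{(t+1)}$ is again $\Psi$-maximal.

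For this I would pivot through the allocation $\hat Y$ obtained from $Y^{(t)}$ by deleting $o$ from $j$. By facts (ii) and (iv), adding $o$ back to $j$, respectively to $i^\star$, each costs exactly one unit, so $Y^{(t)}$ and $Y^{(t+1)}$ are precisely the allocations obtained from $\hat Y$ by decrementing coordinate $j$, respectively coordinate $i^\star$, of the utility vector. Comparing $\vec u^{\hat Y}$ with the algorithm's current vector $\vec u^{X^0\cup X^1}$, we have $\vec u^{\hat Y}\le \vec u^{X^0\cup X^1}$ coordinatewise (from $|X^1_i|\le d^{(t)}_i$ and $|X^1_j|\le d^{(t)}_j-1$), with equality in coordinate $i^\star$ (from $|X^1_{i^\star}|=d^{(t)}_{i^\star}$). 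Then (G2) gives $\phi(\vec u^{\hat Y},i^\star)=\phi(\vec u^{X^0\cup X^1},i^\star)$ and $\phi(\vec u^{\hat Y},j)\le\phi(\vec u^{X^0\cup X^1},j)$, while the algorithm's choice of $i^\star$ gives $\phi(\vec u^{X^0\cup X^1},i^\star)\ge\phi(\vec u^{X^0\cup X^1},j)$; chaining these yields $\phi(\vec u^{\hat Y},i^\star)\ge\phi(\vec u^{\hat Y},j)$. Finally (G1), applied at $\vec u^{\hat Y}$ to the pair $(i^\star,j)$, says that decrementing coordinate $i^\star$ is $\Psi$-at-least-as-good as decrementing coordinate $j$, i.e.\ $Y^{(t+1)}\succeq_\Psi Y^{(t)}$; since $\Psi$ is a total order and $Y^{(t)}$ is $\Psi$-maximal, so is $Y^{(t+1)}$, closing the induction.

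When the loop halts at $t=k^\star$, the invariant supplies a $\Psi$-maximal complete allocation $Y^{(k^\star)}$ whose cost vector coordinatewise dominates that of the output, i.e.\ $\vec u^{X^0\cup X^1}\ge \vec u^{Y^{(k^\star)}}$ coordinatewise; by (C1) this gives $\vec u^{X^0\cup X^1}\succeq_\Psi \vec u^{Y^{(k^\star)}}$, and since $Y^{(k^\star)}$ is $\Psi$-maximal and $\Psi$ is total, the equality clause of (C1) forces $\vec u^{X^0\cup X^1}=\vec u^{Y^{(k^\star)}}$, so the complete allocation $X^0\cup X^1$ is itself $\Psi$-maximal. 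Efficiency is immediate: \CMCA runs in polynomial time by Theorem~\ref{thm:min-cost-partial-allocation}, the loop iterates at most $m$ times, and each iteration evaluates $\phi$ on the $n$ agents. I expect the main obstacle to be exactly the crux step of the third paragraph --- constructing the pivot allocation $\hat Y$, checking the ``marginal cost exactly one'' conditions needed to legitimately invoke (G1), and lining up the (G1)/(G2)/$\argmax$ inequality chain. Fact (iv) (a costly chore in an optimal allocation is costly for \emph{every} agent), though quick to prove, is the other load-bearing ingredient and is easy to overlook.
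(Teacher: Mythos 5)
Your proof is correct, but it takes a genuinely different route from the paper's. The paper argues by contradiction with a single global exchange: it fixes a $\Psi$-maximal (non-redundant) allocation $Y$ chosen to lexicographically dominate all other maximizers, picks an agent $i$ with $|X^1_i|>|Y^1_i|$ and an agent $j$ with $|X^1_j|<|Y^1_j|$, builds $Z$ from $Y$ by moving one chore from $j$ to $i$, and chains (G2), the greedy choice at the iteration $W$ where $i$ received its last chore, and (G1) to get either $Z\succ_\Psi Y$ or ($Z=_\Psi Y$ and $Z\succ_{\lex}Y$), both contradictions; the algorithm's highest-index tie-breaking and the lex-dominant choice of $Y$ are load-bearing there. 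You instead run a forward ``greedy stays ahead'' induction, maintaining at every iteration a $\Psi$-maximal complete allocation whose cost vector coordinatewise dominates the algorithm's current supplementary counts, and you repair the witness locally by moving one costly chore from $j$ to $i^\star$, using the same pivot-vector application of (G1)/(G2). Your approach buys two things: it never uses the index tie-breaking rule or the lexicographic refinement of the optimal allocation (so it establishes correctness under arbitrary tie-breaking, with ties absorbed by the equality clause of (G1)), and your fact (iv) --- that in a $\Psi$-maximal complete allocation every costly chore has marginal cost $1$ for every other agent --- makes explicit the ``unit decrement'' condition needed to legitimately read $Y^{(t)}$ and $Y^{(t+1)}$ as two decrements of a common pivot, a point the paper's construction of $Z$ passes over silently (if that marginal cost were $0$, one would instead get an immediate Pareto-improvement contradiction via (C1)). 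The paper's argument, in exchange, is shorter and avoids carrying an invariant across iterations.
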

\begin{proof}
The computational efficiency of Algorithm \ref{algo:general-yankee-chores} is trivial given Theorem \ref{thm:min-cost-partial-allocation}, so we do not explicitly prove it. We only show correctness. 

Let $X^0$ and $X^1$ be the allocations output by Algorithm \ref{algo:general-yankee-chores}. We use $X$ to denote $X^0 \cup X^1$.
Let $Y = Y^0 \cup Y^1$ be a non-redundant allocation that maximizes $\Psi$.
If there are multiple such $Y$, pick one that lexicographically dominates all others. 

If for all $h \in N$, $|X^1_h| \le |Y^1_h|$, then $X$ maximizes $\Psi$ (since $\Psi$ respects Pareto dominance according to C1) --- we are done. Assume for contradiction that this does not hold.

This means that there is some agent $i$ whose utility under $X$ is strictly lower than under $Y$, i.e. $|X^1_i| > |Y^1_i|$. 
Let $i \in N$ be the agent with highest $\phi(X, i)$ in $X$ such that $|X^1_i| > |Y^1_i|$; if there are multiple agents we break ties in favor of the highest index agent.

If there is no such agent $j$ such that $|X^1_j| < |Y^1_j|$, then it must be the case that $\sum_{h \in N}|X^0_h| < \sum_{h \in N} |Y^0_h|$ contradicting Theorem \ref{thm:min-cost-partial-allocation}. Let $j \in N$ be an agent in $X$ such that $|X^1_j| < |Y^1_j|$.

We construct an allocation $Z$ starting at $Y^0$ and $Y^1$ and moving an arbitrary chore from $Y^1_j$ to $Z^1_i$. Note that if $\phi(Y, i) \ge \phi(Z, j)$, then $Z \succeq_{\Psi} Y$ with equality holding if and only if $\phi(Y, i) = \phi(Z, j)$ (using G1).

Let $W = W^0 \cup W^1$ be the non-redundant allocation maintained by Algorithm \ref{algo:general-yankee-chores} at the start of the iteration where $i$ received its final chore. By our choice of iteration, we must have 
\begin{align}
    \phi(W, i) \ge \phi(W, j). \text{ If equality holds, then } i > j \label{eq:wi-greater-wj}
\end{align}

We can use this to compare $Y$ and $Z$ using (G2):
\begin{align}
    \phi(Y, i) \ge \phi(W, i) \ge \phi(W, j) \ge \phi(Z, j) \label{eq:yi-greater-zj}
\end{align}

If any of these weak inequalities are strict, we are done since $\phi(Y, i) > \phi(Z, j)$. If all the weak inequalities are equalities, then $\phi(Y, i) = \phi(Z, j)$ which in turn implies $Y =_{\Psi} Z$. However, since $i > j$ (from \eqref{eq:wi-greater-wj}), $Z$ lexicographically dominates $Y$ which contradicts our assumption on $Y$. 

Therefore, $X$ must be a $\Psi$ maximizing allocation.
\end{proof}

We also show that for any gain function $\phi$, Algorithm \ref{algo:general-yankee-chores} computes a \MAXUSW allocation.

\begin{prop}
For any gain function $\phi$, Algorithm \ref{algo:general-yankee-chores} computes a complete \MAXUSW allocation.
\end{prop}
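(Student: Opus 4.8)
The plan is to separate the claim into two independent parts: (i) the algorithm terminates and its output $X := X^0 \cup X^1$ is complete, and (ii) no complete allocation has strictly larger \USW than $X$. Neither part uses anything about $\phi$ beyond the fact that each loop iteration hands out exactly one chore.

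For (i), I would track the quantity $\sum_{j \in N} |X^1_j|$. Each iteration transfers one chore from $X^0_0$ into some $X^1_i$, and that chore is drawn from $X^0_0 \setminus \bigcup_{j \in N} X^1_j$, so the bundles $X^1_1, \dots, X^1_n$ stay pairwise disjoint subsets of $X^0_0$ and $\sum_{j \in N} |X^1_j|$ strictly increases each step. Since it is bounded above by $|X^0_0|$, the loop terminates, and at termination the loop guard gives $\sum_{j \in N} |X^1_j| = |X^0_0|$, hence $\bigcup_{j \in N} X^1_j = X^0_0$. Because $X^0$ is an allocation whose unallocated set is exactly $X^0_0$, it follows that $\bigcup_{i \in N}(X^0_i \cup X^1_i) = O$, i.e.\ $X$ is complete.

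For (ii), the key observation is that binary supermodular costs make \USW a function only of the sizes of the supplementary bundles. By the decomposition invariant established just above (or \Cref{lem:decomposition}), $c_i(X_i) = |X^1_i|$ for every $i$, so $\mathrm{USW}(X) = -\sum_{i \in N} |X^1_i| = -|X^0_0| = -\big(m - \sum_{i \in N} |X^0_i|\big)$, using that $X^0$ partitions $O \setminus X^0_0$ among the agents. Now take any complete allocation $Y$ and decompose it as $Y^0 \cup Y^1$ via \Cref{lem:decomposition}; completeness forces $Y^0_0 = Y^1_0 = \emptyset$, so $\sum_{i \in N}|Y^0_i| + \sum_{i \in N}|Y^1_i| = m$. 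Since $c_i(Y^0_i) = 0$ for all $i$, the sub-allocation $Y^0$ is a legal allocation with \USW $0$, so \Cref{thm:min-cost-partial-allocation} gives $\sum_{i \in N}|Y^0_i| = |\bigcup_{i \in N} Y^0_i| \le |\bigcup_{i \in N} X^0_i| = \sum_{i \in N}|X^0_i|$. Therefore $\sum_{i \in N}|Y^1_i| \ge m - \sum_{i \in N}|X^0_i| = |X^0_0|$, and $\mathrm{USW}(Y) = -\sum_{i \in N}|Y^1_i| \le -|X^0_0| = \mathrm{USW}(X)$, as desired.

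Combining (i) and (ii) yields the proposition. I do not expect a substantive obstacle here; the only points needing a little care are verifying that $Y^0$ is genuinely an allocation with \USW $0$ so that \Cref{thm:min-cost-partial-allocation} applies to it, and justifying the replacement of $|\bigcup_{i \in N} \cdot|$ by $\sum_{i \in N} |\cdot|$ for the clean sub-allocations, which is immediate since an allocation assigns each chore to at most one agent.
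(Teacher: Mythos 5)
Your proof is correct and follows essentially the same route as the paper's: both hinge on decomposing an arbitrary complete competitor $Y$ into clean and supplementary parts via \Cref{lem:decomposition} (so that $\mathrm{USW}$ is determined by $\sum_i |Y^1_i|$) and then invoking the maximality of the \CMCA partial allocation from \Cref{thm:min-cost-partial-allocation}. The only differences are cosmetic: you argue directly rather than by contradiction, and you spell out termination and completeness, which the paper leaves implicit.
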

\begin{proof}
Assume for contradiction that there exists a complete allocation $Y = Y^0 \cup Y^1$ with a higher USW than $X^0 \cup X^1$. Since $c_i(Y_i) = |Y^1_i|$ for each agent $i \in N$, we must have $\sum_{h \in N} |Y^1_h| < \sum_{h \in N} |X^1_h|$. Since both allocations are complete and do not have any unallocated goods, this in turn implies  $\sum_{h \in N} |Y^0_h| > \sum_{h \in N} |X^0_h|$ which contradicts Theorem \ref{thm:min-cost-partial-allocation}.
\end{proof}

\section{Applying General Yankee Swap for Chores}
In this section, we apply Algorithm \ref{algo:general-yankee-chores} to the justice criteria described in Section \ref{sec:justice-criteria}. We start with the weighted leximin allocation.

\begin{theorem}
Algorithm \ref{algo:general-yankee-chores} run with input gain function $\phi(X, i) = (\frac{-c_i(X_i)}{w_i}, w_i)$ computes a weighted leximin allocation.   
\end{theorem}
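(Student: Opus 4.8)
The plan is to verify that the gain function $\phi(X, i) = \left(\frac{-c_i(X_i)}{w_i}, w_i\right)$ satisfies conditions (C1) and (C2) for the justice criterion $\Psi = $ weighted leximin, and then invoke the general correctness theorem to conclude that Algorithm~\ref{algo:general-yankee-chores} outputs a weighted leximin allocation. Condition (C1) is essentially immediate: if $\vec x \ge \vec y$ componentwise (as utility vectors in $\Z^n_{\le 0}$), then each agent's weighted utility only goes up, so the sorted weighted utility vector of $\vec x$ weakly lexicographically dominates that of $\vec y$; strict improvement in any coordinate forces strict lexicographic domination, giving the "equality iff $\vec x = \vec y$" clause. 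Condition (G2) is also easy: $\phi(\vec x, i) = \left(\frac{x_i}{w_i}, w_i\right)$ depends on $\vec x$ only through $x_i$, and is monotone increasing in $x_i$ (with the second coordinate fixed), so $x_i \ge y_i$ gives $\phi(\vec x, i) \ge \phi(\vec y, i)$ with equality iff $x_i = y_i$.

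The substantive step is (G1). Here I would take a vector $\vec x \in \Z^n_{\le 0}$ and agents $i, j$, form $\vec y$ by decrementing coordinate $i$ and $\vec z$ by decrementing coordinate $j$, and assume $\phi(\vec x, i) \ge_{\lex} \phi(\vec x, j)$; the goal is $\vec y \succeq_{\Psi} \vec z$ with equality iff $\phi(\vec x, i) = \phi(\vec x, j)$. The assumption $\phi(\vec x, i) \ge_{\lex} \phi(\vec x, j)$ unpacks to: either $\frac{x_i}{w_i} > \frac{x_j}{w_j}$, or $\frac{x_i}{w_i} = \frac{x_j}{w_j}$ and $w_i \ge w_j$. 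The key observation is that $\vec y$ and $\vec z$ differ from $\vec x$ only at positions $i$ and $j$ respectively, so the multisets of weighted utilities agree except that $\vec y$ replaces the value $\frac{x_i}{w_i}$ by $\frac{x_i - 1}{w_i}$ while $\vec z$ replaces $\frac{x_j}{w_j}$ by $\frac{x_j - 1}{w_j}$. Comparing sorted vectors then reduces to a purely local comparison: I must show that the weakly larger of the two "new" values is obtained by decrementing at $i$, i.e. $\frac{x_i - 1}{w_i} \ge \frac{x_j - 1}{w_j}$, and moreover that the full sorted order is no worse. When $\frac{x_i}{w_i} > \frac{x_j}{w_j}$ strictly, decrementing at the already-larger coordinate can only be better (or tie), and I expect to argue the sorted vector of $\vec y$ strictly dominates that of $\vec z$ unless the two coincide. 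When $\frac{x_i}{w_i} = \frac{x_j}{w_j}$ and $w_i \ge w_j$, decrementing the entry with the larger weight produces a larger resulting weighted utility ($\frac{x_i-1}{w_i} \ge \frac{x_j-1}{w_j}$ since $-\frac{1}{w_i} \ge -\frac{1}{w_j}$), again giving $\vec y \succeq_{\Psi} \vec z$, with equality exactly when $w_i = w_j$, which is exactly the case $\phi(\vec x, i) = \phi(\vec x, j)$.

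The main obstacle is making the sorted-vector comparison rigorous rather than hand-wavy: since $\vec y$ and $\vec z$ differ from $\vec x$ in different coordinates, one cannot directly compare them coordinatewise, and one must reason about how replacing one element of a multiset by a smaller one interacts with sorting. I would handle this with a small self-contained lemma of the form: if two vectors agree as multisets except that one has an extra copy of value $a$ and the other an extra copy of value $b$ (the remaining multiset being common), then the one with the larger of $a, b$ has a sorted vector that weakly lexicographically dominates, with equality iff $a = b$. Applying this with $a = \frac{x_i-1}{w_i}$, $b = \frac{x_j-1}{w_j}$ (and noting the common part is the multiset of $\{x_h/w_h : h \ne i, j\}$ together with $x_j/w_j$ for $\vec y$ and $x_i/w_i$ for $\vec z$ — here I need the case split above to confirm which common value is the "extra" one, but since $x_i/w_i$ and $x_j/w_j$ are comparable and the decremented values preserve or refine that comparison, the bookkeeping goes through) finishes (G1). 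With (C1) and (C2) established, Theorem~4.2 (the general correctness theorem) applies directly and the proof is complete.
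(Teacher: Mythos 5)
Your overall plan follows the paper's: verify (C1), (G2), and (G1) and invoke the general correctness theorem, and your treatment of (C1), (G2), and of the tie case $\frac{x_i}{w_i} = \frac{x_j}{w_j}$ is fine. The genuine gap is the strict case of (G1). From $\frac{x_i}{w_i} > \frac{x_j}{w_j}$ you cannot deduce the inequality you yourself identify as the key step, $\frac{x_i-1}{w_i} \ge \frac{x_j-1}{w_j}$: take $w_i = 1$, $x_i = 0$, $w_j = 10$, $x_j = -1$. Then $\frac{x_i}{w_i} = 0 > -\tfrac{1}{10} = \frac{x_j}{w_j}$, so $\phi(\vec x, i) \succ_{\lex} \phi(\vec x, j)$, yet $\frac{x_i-1}{w_i} = -1 < -\tfrac{2}{10} = \frac{x_j-1}{w_j}$. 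Worse, the conclusion you are trying to reach is itself false here: with $n=2$ and these values, the sorted weighted utility vector of $\vec y$ (decrement $i$) is $(-1, -0.1)$ while that of $\vec z$ (decrement $j$) is $(-0.2, 0)$, so $\vec z \succ_{\Psi} \vec y$. Hence (G1) genuinely fails for the stated gain function $\bigl(\frac{-c_i(X_i)}{w_i}, w_i\bigr)$, and no amount of care with your sorting lemma can close the gap (note also that in this case the lemma does not apply as stated, since $\vec y$ and $\vec z$ differ in two entries, contributing $\{\frac{x_i-1}{w_i}, \frac{x_j}{w_j}\}$ versus $\{\frac{x_i}{w_i}, \frac{x_j-1}{w_j}\}$ over the common part).

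You are in good company: the paper's own proof disposes of this case with the unjustified assertion that it is ``always better to subtract utility from agents with higher weighted utility,'' which is correct when all weights are equal but false otherwise, as the example shows. Indeed the failure propagates to the algorithm itself: with two agents of weights $1$ and $10$ and two chores of marginal cost $1$ to everyone, the stated gain function allocates one chore to each agent, whereas the weighted leximin allocation gives both chores to the weight-$10$ agent. The result can be repaired by taking the first coordinate of the gain function to be the post-decrement weighted utility, $\phi(X,i) = \bigl(\frac{-c_i(X_i)-1}{w_i}, w_i\bigr)$: in the strict case the smallest value on which the two multisets of weighted utilities differ then belongs to $\vec z$, so a two-element version of your sorting lemma gives $\vec y \succ_{\Psi} \vec z$, and equality of $\phi$ forces $x_i = x_j$ and $w_i = w_j$, hence $\Psi$-equality. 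So your strategy is structurally sound, but for the $\phi$ in the statement the key inequality in the strict case is false and the proof cannot be completed.
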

\begin{proof}
Formally, for any two allocations $X \succ_{\Psi} Y$ iff $\vec {se}^X$ lexicographically dominates $\vec {se}^Y$. It is easy to see that $\Psi$ trivially satisfies (C1) and $\phi$ trivially satisfies (G2); so we only show (G1).

Assume $\phi(\vec x, i) > \phi(\vec x, j)$ for some agents $i, j \in N$ and vector $\vec x \in \Z^n_{\le 0}$. Let $\vec y$ be the allocation that results from starting at $\vec x$ adding $-1$ to $x_i$ and let $\vec z$ be the allocation that results from starting at $\vec x$ adding $-1$ to $x_j$.

If $\phi(\vec x, i) > \phi(\vec x, j)$, then one of the following two cases must be true.

\textbf{Case 1:} $\frac{x_i}{w_i} > \frac{x_j}{w_j}$. Since it is always better to subtract utility from agents with higher weighted utility, we have $\vec y \succ_{\Psi} \vec z$.

\textbf{Case 2:} $\frac{x_i}{w_i}= \frac{x_j}{w_j}$ and $w_i > w_j$. If this is true, we have $\frac{y_j}{w_j} = \frac{z_i}{w_i}$ by assumption. However, $\frac{y_i}{w_i} = \frac{x_i - 1}{w_i} > \frac{x_j - 1}{w_j} = \frac{z_j}{w_j}$. Since the two vectors differ only in the values of the indices $j$ and $i$, we can conclude that $\vec y \succ_{\Psi} \vec z$.

This implies that when $\phi(\vec x, i) > \phi(\vec x, j)$, we have $\vec y \succ_{\Psi} \vec z$ as required. 

When $\phi(\vec x, i) = \phi(\vec x, j)$, we must have $\frac{x_i}{w_j}= \frac{x_j}{w_j}$ and $w_i = w_j$. This gives us $\frac{y_j}{w_j} = \frac{z_i}{w_i}$ and $\frac{y_i}{w_i} = \frac{x_i - 1}{w_i} = \frac{x_j - 1}{w_j} = \frac{z_j}{w_j}$ which implies that $\vec y =_{\Psi} \vec z$. 
\end{proof}
We can similarly show that min weighted $p$-mean malfare allocations can be computed efficiently.

\begin{theorem}
Algorithm \ref{algo:general-yankee-chores} run with input gain function $\phi(X, i) = w_i[c_i(X_i)^p - (c_i(X_i)+1)^p]$ computes a min weighted $p$-mean welfare allocation for any $p \ge 1$.   
\end{theorem}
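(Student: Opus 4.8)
The plan is to derive this from the general correctness theorem for \Cref{algo:general-yankee-chores}: it suffices to exhibit the justice criterion $\Psi$ that the stated gain function targets and to verify that $\Psi$ satisfies (C1) while $\phi$ satisfies (C2). Here $\Psi$ is ``minimize weighted $p$-mean malfare'', i.e.\ $X \succeq_{\Psi} Y$ iff $\sum_{h\in N} w_h c_h(X_h)^p \le \sum_{h\in N} w_h c_h(Y_h)^p$. I would first move to utility-vector language: for $\vec x \in \Z^n_{\le 0}$ put $M(\vec x) = \sum_{h \in N} w_h (-x_h)^p$, so that $\Psi$ prefers smaller $M$ and $\vec x =_{\Psi} \vec y$ exactly when $M(\vec x) = M(\vec y)$, and record that the gain function reads $\phi(\vec x, i) = w_i\big[(-x_i)^p - (1 - x_i)^p\big]$. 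The single structural fact driving the argument is that $M$ is separable across agents, so decrementing $x_i$ by one increases $M$ by exactly $w_i\big[(1-x_i)^p - (-x_i)^p\big] = -\phi(\vec x, i)$.

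With this in hand, (C1) and (G1) are quick. For (C1): if $x_h \ge y_h$ for all $h$ then $0 \le -x_h \le -y_h$, and since $t \mapsto t^p$ is non-decreasing on $\R_{\ge 0}$ this gives $M(\vec x) \le M(\vec y)$, i.e.\ $\vec x \succeq_{\Psi} \vec y$; as each $w_h > 0$ and $t \mapsto t^p$ is strictly increasing on $\R_{\ge 0}$ for $p \ge 1$, equality of the $M$-values forces $\vec x = \vec y$. For (G1), a comparison of scalars since $b = 1$: letting $\vec y$ and $\vec z$ be $\vec x$ with $x_i$, respectively $x_j$, decreased by one, the structural fact yields
\[ M(\vec y) - M(\vec z) = \big(M(\vec x) - \phi(\vec x,i)\big) - \big(M(\vec x) - \phi(\vec x,j)\big) = \phi(\vec x, j) - \phi(\vec x, i), \]
so $\phi(\vec x, i) \ge \phi(\vec x, j)$ is equivalent to $M(\vec y) \le M(\vec z)$, i.e.\ $\vec y \succeq_{\Psi} \vec z$, with $\vec y =_{\Psi} \vec z$ precisely when $\phi(\vec x,i) = \phi(\vec x, j)$.

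The only step with real content is (G2), and this is where $p \ge 1$ is genuinely needed. Since $\phi(\vec x, i)$ depends on $\vec x$ only through $t := -x_i \in \Z_{\ge 0}$, via $g(t) := w_i\big(t^p - (t+1)^p\big)$, for $x_i \ge y_i$ (equivalently $-x_i \le -y_i$) the claim $\phi(\vec x, i) \ge \phi(\vec y, i)$ reduces to showing $g$ is non-increasing on $\R_{\ge 0}$ --- equivalently, that $t \mapsto (t+1)^p - t^p$ is non-decreasing, which is exactly the convexity of $t \mapsto t^p$ (nonnegativity of its discrete second difference); alternatively $g'(t) = p\,w_i\big(t^{p-1} - (t+1)^{p-1}\big) \le 0$ since $p - 1 \ge 0$. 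The case $x_i = y_i$ gives equality trivially, as $\phi(\cdot,i)$ uses only the $i$-th coordinate. Having established (C1) and (C2), the theorem follows from the correctness theorem for \Cref{algo:general-yankee-chores}. I expect (G2) --- really just the convexity of $x^p$ --- to be the crux; the rest is bookkeeping and parallels the weighted leximin proof above.
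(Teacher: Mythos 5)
Your proposal is correct and follows essentially the same route as the paper: verify (C1), (G1), (G2) for the malfare ordering and invoke the general correctness theorem, with (G1) handled exactly as in the paper via separability of $\sum_h w_h(-x_h)^p$ so that decrementing coordinate $i$ changes the malfare by $-\phi(\vec x,i)$. The only difference is that you spell out (C1) and (G2) --- in particular isolating the convexity of $t\mapsto t^p$ (where $p\ge 1$ enters) for (G2) --- which the paper dismisses as easy; this is a welcome elaboration, not a different argument.
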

\begin{proof}
Formally, $X \succ_{\Psi} Y$  if and only if $\sum_{h \in N} w_h c_h(X_h)^p < \sum_{h \in N} w_h c_h(Y_h)^p$.
It is easy to see that $\Psi$ satisfies (C1) and $\phi$ satisfies (G2).  Similar to the previous result, we only show (G1). For any vector $\vec x \in \Z^n_{\le 0}$, consider two agents $i$ and $j$. Let $\vec y$ be the vector that results from starting at $\vec x$ and adding $-1$ to $x_i$ and $\vec z$ be the vector that results from starting at $\vec x$ and adding $-1$ to $x_j$. We have: 

\begin{align*}
    \vec y \succeq_{\Psi} \vec z
    &\Leftrightarrow  \sum_{h \in N} w_h(-y_h)^p  \le \sum_{h \in N} w_h (-z_h)^p \\
    &\Leftrightarrow \sum_{h \in N} w_h (-y_h)^p - \sum_{h \in N} w_h (-x_h)^p \le \sum_{h \in N} w_h (-z_h)^p - \sum_{h \in N} w_h (-x_h)^p \\
    &\Leftrightarrow w_i [(-x_i - 1)^p - (-x_i)^p] \le w_j [(-x_j - 1)^p - (-x_j)^p] \\
    &\Leftrightarrow \phi(\vec x, i) \ge \phi(\vec x, j)
\end{align*}

We can replace the inequalities above with equalities and show that $$\vec y =_{\Psi} \vec z \Leftrightarrow \phi(\vec x, i) = \phi(\vec x, j)$$.
\end{proof}

\bibliographystyle{plainnat}
\bibliography{abb,literature}

\end{document}